\newtheoremstyle{theorem}{1em}{1em}{\slshape}{0pt}{\bfseries}{.}{ }{}
\theoremstyle{theorem}
\newtheorem{theorem}{Theorem}
\newtheorem*{theorem*}{Theorem}
\newtheorem{lemma}[theorem]{Lemma}
\newtheorem*{claim*}{Claim}
\theoremstyle{remark}
\newtheorem*{remark*}{Remark}
\newtheorem*{question*}{Question}
\providecommand{\setZ}{\mathbb{Z}}
\providecommand{\setR}{\mathbb{R}}
\newcommand{\E}{\mathop{\mathbb{E}}}
\newenvironment{proofofclaim}{\vspace{1ex}\noindent{\emph{Proof of claim.}}\hspace{0.5em}}
   	    {\hfill$\lozenge$\vspace{1ex}}
        \def\drawRect#1#2#3#4#5{
           \FPeval{\x2}{(#2) + (#4)} 
           \FPeval{\y2}{(#3) + (#5)} 
           \pspolygon[#1](#2,#3)(\x2,#3)(\x2,\y2)(#2,\y2)
        }
\date{}   %\date{\today} %, \currenttime} 
\begin{document}
%\begin{comment}
\title{A simpler proof for $O(\textrm{congestion} + \textrm{dilation})$ packet routing} 

\author{Thomas Rothvoß\thanks{Email: { \tt{rothvoss@math.mit.edu}}. Supported by the Alexander von Humboldt Foundation within the Feodor Lynen program, by ONR grant N00014-11-1-0053 and by NSF contract
CCF-0829878.} \vspace{2mm} \\ M.I.T., Cambridge, USA}% { \tt{rothvoss@math.mit.edu}}}

\maketitle
%\thispagestyle{empty}

%\vspace{-0.5cm} \begin{center}  {\bf Draft --- please do not distribute}  \end{center} \vspace{0.3cm}
%
\begin{abstract}
In the \emph{store-and-forward routing} problem, packets have to be routed 
along given paths such that the arrival time of the latest packet is
minimized. 
%Consider a directed graph with source-sink pairs, 
%each one equipped with a given path. 
%The {\sc Packet Routing} problem asks to route the packets along their paths such that in any time
%unit every edge is crossed by at most one packet and the arrival time of 
%the latest packet is minimized. %Of course, two trivial lowerbounds on
%the length of any schedule are: (1) the congestion $C$, i.e. the n of  any schedule in any instance
%will need 
%A very general routing problem is as follows: 
A groundbreaking result of Leighton, Maggs and Rao says that this can always
be done in time $O(\textrm{congestion} + \textrm{dilation})$, where the
\emph{congestion} is the maximum number of paths using an edge and the \emph{dilation} is
the maximum length of a path.  
However, the analysis is quite arcane and complicated and works by iteratively
improving an infeasible schedule. 
Here, we provide a more accessible analysis which is based on 
conditional expectations. Like \cite{CongestionPlusDilation-packet-routing-LeightonMaggsRao94}, our easier analysis 
also guarantees that constant size edge buffers suffice.

Moreover, it was an open problem stated e.g. by Wiese~\cite{Dissertation-Wiese11}, whether there
is any instance where all schedules need at least $(1+\varepsilon)\cdot(\textrm{congestion}+\textrm{dilation})$ steps, 
for a constant $\varepsilon>0$. We answer this question affirmatively 
by making use of a probabilistic construction.
\end{abstract}

\section{Introduction}

One of the fundamental problems in parallel and distributed systems is to transport
packets within a communication network in a timely manner. Any routing 
protocol has to make two kinds of decisions: (1) on which paths shall the packets
be sent and (2) according to which priority rule should packets be routed along those paths, considering that communication links have usually a limited 
bandwidth. In this paper, we focus on the second part of the decision process. 
More concretely, we assume that a network in form of a directed graph  $G = (V,E)$ 
is given, together with source sink pairs $s_i,t_i \in V$ for $i=1,\ldots,k$ and $s_i$-$t_i$
paths $P_i \subseteq E$.
%Our goal is to route 
% and $s_i$-$t_i$ paths $P_i \subseteq E$. 
So the goal is to route the packets from their source along the given path
to their sink in such a way that the \emph{makespan} is minimized. 
Here, the makespan denotes the time when the last packet arrives at its destination. Moreover, we assume \emph{unit bandwidth} and \emph{unit transit time}, 
i.e. in each time unit only one packet can traverse an edge and the traversal
takes exactly one time unit.
Since the only freedom for the scheduler lies in the decision when packets move and when they wait, 
this setting is usually called \emph{store and forward routing}.
Note that we make no assumption about the structure of the graph or the paths. In fact, we can allow that the graph has multi-edges and loops; a path may even revisit
the same node several times. We only forbid that a path uses the same edge more than once. 
%\rem{Etwas deplaziert.}

%A packet needs one time unit to traverse an edge, moreover the edges have unit 
%capacities, i.e. per time unit at most one packet can traverse each edge. 
\begin{figure}
\begin{center}
\psset{unit=1.3cm}
\begin{pspicture}(0,0)(3,2)
\cnode*(0,0){2.5pt}{v1}
\cnode*(1,1){2.5pt}{v2}
\cnode*(2,1){2.5pt}{v3}
\cnode*(3,0){2.5pt}{v4} 
\cnode*(0,2){2.5pt}{v5}
\cnode*(3,2){2.5pt}{v6}
%\nput{0}{v1}{$v_1$}\nput{0}{v2}{$v_2$}\nput{0}{v3}{$v_3$}\nput{0}{v4}{$v_4$}\nput{0}{v5}{$v_5$}\nput{0}{v6}{$v_6$}
\ncline[arrowsize=5pt]{->}{v1}{v2}
\ncline[arrowsize=5pt]{->}{v5}{v2}
\ncline[arrowsize=5pt]{->}{v1}{v5}
\ncline[arrowsize=5pt]{->}{v2}{v3}
\ncline[arrowsize=5pt]{->}{v3}{v6}
\ncline[arrowsize=5pt]{->}{v3}{v4}
\ncline[arrowsize=5pt]{->}{v4}{v1}
\ncarc[linecolor=lightgray,arcangle=30,arrowsize=5pt,linewidth=1.5pt]{->}{v5}{v2} \naput[linecolor=lightgray,labelsep=0pt]{$\lightgray{P_1}$}
\ncarc[linecolor=lightgray,arcangle=30,arrowsize=5pt,linewidth=1.5pt]{->}{v2}{v3}
\ncarc[linecolor=lightgray,arcangle=30,arrowsize=5pt,linewidth=1.5pt]{->}{v3}{v4}
\ncarc[linecolor=lightgray,arcangle=30,arrowsize=5pt,linewidth=1.5pt]{->}{v5}{v2}

\ncarc[linecolor=gray,arcangle=-30,arrowsize=5pt,linewidth=1.5pt,linestyle=dashed]{->}{v3}{v4} \nbput[linecolor=lightgray,labelsep=0pt,npos=0.3]{$\gray{P_2}$}
\ncarc[linecolor=gray,arcangle=20,arrowsize=5pt,linewidth=1.5pt,linestyle=dashed]{->}{v4}{v1}
\ncarc[linecolor=gray,arcangle=20,arrowsize=5pt,linewidth=1.5pt,linestyle=dashed]{->}{v1}{v5}
\ncarc[linecolor=gray,arcangle=-30,arrowsize=5pt,linewidth=1.5pt,linestyle=dashed]{->}{v5}{v2}

\ncarc[linecolor=darkgray,linestyle=dotted,arcangle=-20,arrowsize=5pt,linewidth=2.5pt]{->}{v1}{v2} \nbput[linecolor=darkgray,labelsep=0pt,npos=0.7]{$\darkgray{P_3}$}
\ncarc[linecolor=darkgray,linestyle=dotted,arcangle=-20,arrowsize=5pt,linewidth=2.5pt]{->}{v2}{v3}
\ncarc[linecolor=darkgray,linestyle=dotted,arcangle=-20,arrowsize=5pt,linewidth=2.5pt]{->}{v3}{v6}
\nput{180}{v5}{$\lightgray{s_1}$}
\nput{0}{v4}{$\lightgray{t_1}$}
\nput{90}{v3}{$\gray{s_2}$}
\nput{90}{v2}{$\gray{t_2}$}
\nput{180}{v1}{$\darkgray{s_3}$}
\nput{0}{v6}{$\darkgray{t_3}$}

\end{pspicture}
\caption{Example instance with $k=3$, $C=2$ and $D=4$}
\end{center}
\end{figure}
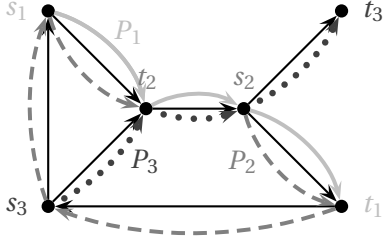

Two natural parameters of the instance are the 
 \emph{congestion} $C := \max_{e\in E} |\{ i \mid e \in P_i\}|$, i.e. the maximum number of 
paths that share a common edge and the 
\emph{dilation} $D:=\max_{i=1,\ldots,k} |P_i|$, i.e. the length of the longest path.
%\newpage
%\pagestyle{plain}
%\setcounter{page}{1}
Obviously, for any instance, both 
parameters $C$ and $D$ are lower bounds on the makespan for
any possible routing policy. 
Surprisingly, Leighton, Maggs and Rao~\cite{CongestionPlusDilation-packet-routing-LeightonMaggsRao94} could prove that the optimum achievable makespan is always within a constant
factor of $C+D$. %Their approach starts with an infeasible schedule and inserts 
%iteratively random delays to reduce the infeasibility 
%. 
Since then, their approach has been revisited several times. 
First, \cite{Algorithms-for-Congestion-plus-Dilation-Packet-Routing-LeightonMaggsRicha99} provided a polynomial time algorithm that makes the approach constructive
(which nowadays would be easy using the Moser Tardos algorithm~\cite{ConstructiveLLL-MoserTardosJACM10}). 
Scheideler~\cite[Chapter~6]{DBLP:books/sp/Scheideler98} provides a more careful
(and more accessible) analysis which reduces the hidden constants to $39(C+D)$.
More recently Peis and Wiese~\cite{DBLP:conf/ipco/PeisW11} reduced the constant
to 24 (and beyond, for larger minimum bandwidth or transit time). 

Already the original paper of \cite{CongestionPlusDilation-packet-routing-LeightonMaggsRao94} also showed that (huge) constant size \emph{edge buffers} are sufficient.
Scheideler~\cite{DBLP:books/sp/Scheideler98} proved
that even a buffer size of 2 is enough. 
However, all proofs \cite{CongestionPlusDilation-packet-routing-LeightonMaggsRao94,Algorithms-for-Congestion-plus-Dilation-Packet-Routing-LeightonMaggsRicha99,DBLP:books/sp/Scheideler98,DBLP:conf/ipco/PeisW11} use the original idea of 
Leighton, Maggs and Rao to  start with an infeasible schedule and insert 
iteratively random delays to reduce the infeasibility until 
no more than $O(1)$ packets use an edge per time step (in each iteration, applying the Lovász Local Lemma).

In this paper, we suggest a somewhat dual approach in which we start with a 
probabilistic schedule which is feasible in expectation and then reduce step by step the 
randomness (still making use of the Local Lemma). 
%to argue that there exists a possibly infeasible schedule which, if 
%averaged over a longer time frame.
Our construction here is not fundamentally different from the original 
work of \cite{CongestionPlusDilation-packet-routing-LeightonMaggsRao94}, but 
the emerging proof is ``less iterative'' and, in the opinion of the author, 
also more clear and explicit in demonstrating to the reader \emph{why} a constant 
factor suffices. Especially obtaining the additional property of 
constant size edge buffers is fairly simple in our construction.
%While the original proof states, 

If it comes to lower bounds for general routing strategies,  
the following instance is essentially the worst known one: $C$ many packets share the
same path of length $D$. Then it takes $C$ time units until the last packet crosses the 
first edge; that packet needs $D-1$ more time units to reach its destination, leading
to a makespan of $C+D-1$. Wiese~\cite{Dissertation-Wiese11}) states that 
no example is known where the optimum makespan needs to be even a small 
constant factor larger. %  To the best of our knowledge
%In contrast, we can prove the following (answering an open question, stated by
We answer the open question in \cite{Dissertation-Wiese11} and show
that for a universal constant $\varepsilon > 0$, there is a family of instances in which every routing policy needs at 
least $(1 + \varepsilon)\cdot(C+D)$ time units (and $C,D \to \infty$)\footnote{The constant can be chosen e.g. as $\varepsilon := 0.00001$, though we do not make any attempt to optimize the constant, but focus on a simple exposition.}.
%by proving the following:
%\begin{theorem}
%For a some universal constant $\varepsilon > 0$, there is a family of instances 
%with  congestion $C=n$ and dilation $D:=2n + O(1)$
%in which every routing policy needs at least $(1 + \varepsilon)\cdot(C+D)$ time units.
%\end{theorem}
In our chosen instance, we generate paths from random permutations and 
use  probabilistic arguments for the analysis.

\subsection{Related Work}

The result of \cite{CongestionPlusDilation-packet-routing-LeightonMaggsRao94,Algorithms-for-Congestion-plus-Dilation-Packet-Routing-LeightonMaggsRicha99} could be interpreted as a constant factor approximation algorithm for the problem of 
finding the minimum makespan. In contrast, finding the optimum schedule is $\mathbf{NP}$-hard~\cite{HardnessOfStoreAndForward-ClementiDiIanniJournal96}. 
In fact, even on trees, the problem remains $\mathbf{APX}$-hard~\cite{DBLP:conf/waoa/PeisSW09}. 
If we generalize the problem to finding paths plus schedules, then
constant factor approximation algorithms are still possible due to Srinivasan and Teo~\cite{ConstantFactorPacketRouting-SrinivasanTeo00} (using the
fact that it suffices to find paths that minimize the sum of congestion and
dilation). Koch et al.~\cite{RealTimeRouting-KochPeisSkutellaWieseAPPROX09}
extend this to a more general setting,  where messages consisting of several packets have to be sent. % \rem{Details!}

The Leighton-Maggs-Rao result, apart from being 
quite involved, has the  disadvantage of being a non-local offline algorithm.
In contrast, there is a distributed algorithm with makespan $O(C) + (\log^* n)^{O(\log^* n)}D + \log^{O(1)} n$ by Rabani and Tardos~\cite{DistributedRouting-RabaniTardosSTOC96} which was later improved to $O(C + D + \log^{1+\varepsilon} n)$ 
by Ostrovsky and Rabani~\cite{DBLP:conf/stoc/OstrovskyR97}.
If the paths are indeed shortest paths, then there is a
randomized online routing policy which finishes in $O(C + D + \log k)$ steps~\cite{ShortestPathRouting-MeyerADH-Voecking99}.
To the best of our knowledge, the question concerning the existence of an $O(C+D)$ online algorithm is still open.
We refer to the book of Scheideler~\cite{DBLP:books/sp/Scheideler98} for 
a more detailed overview about routing policies.

One can also reinterpret the
packet routing problem as \emph{(acyclic) job shop scheduling}  $J \mid p_{ij}=1,\textrm{acyclic} \mid C_{\max}$, where jobs $J$ and machines 
$M$ are given. Each job has a sequence of machines that it needs to be
processed on in a given order (each machine appears at most once in this sequence), 
while all processing times have unit length. 
For the natural generalization ($J \mid p_{ij},\textrm{acyclic} \mid C_{\max}$) with arbitrary processing times $p_{ij}$, Feige \& Scheideler~\cite{AcyclicJobShops-FeigeScheideler02} showed that schedules of 
length $O(L \cdot \log L \cdot \log \log L)$ are always possible and for some instances, every schedule 
needs at least 
$\Omega(L \cdot \frac{\log L}{\log \log L})$ time units, where we abbreviate $L := \max\{C,D\}$.\footnote{In this setting, one extends $\displaystyle{C = \max_{i \in M} \sum_{j \in J: j\textrm{ uses }i} p_{ij}}$ and $\displaystyle{D = \max_{j \in J} \sum_{i \in M: j\textrm{ uses }i} p_{ij}}$.}
Svensson and Mastrolilli~\cite{FlowAndJobShopHardness-SvenssonMastrolilli-JACM11} showed that this lower bound even holds in the special case 
of \emph{flow shop scheduling}, where all jobs need to
be processed on all machines in the same order (in packet routing, this corresponds to the case
that all paths $P_i$ are identical). In fact, for \emph{flow shop scheduling with jumps} (i.e. each job needs 
to be processed on a given subset of machines) it is even $\mathbf{NP}$-hard to approximate the 
optimum makespan within any constant factor~~\cite{FlowAndJobShopHardness-SvenssonMastrolilli-JACM11}. 
%However, though the natural lower bound of $L$ is 
%too weak for flow shop scheduling, a constant factor approximation might still be possible in that case. 

In contrast, if we allow preemption, then even for acyclic job shop scheduling, the makespan can be reduced to $O(C + D\log \log \max_{ij} p_{ij})$~\cite{AcyclicJobShops-FeigeScheideler02} and it is conceivable 
that even $O(C + D)$ might suffice. %makespan is not ruled out. 
%For this case it is still open, whether schedules of length $O(C + D)$ always exist, 
%see \cite{OpenProblemsInScheduling-Woeginger02}.

%\begin{itemize}
%\item \cite{DBLP:books/sp/Scheideler98} Verkuerzter LMR Beweis für $O(C+D)$ Routing (jedoch ohne Betrachtung von Edge Buffern). (wird auch in Scheideler's Habil. wiedergegeben).
%\item \cite{HardnessOfStoreAndForward-ClementiDiIanniJournal96}Di Ianni 96: Minimizing the delay is NP-hard.
%\item \cite{DBLP:conf/ipco/PeisW11} Improve constant to
%\end{itemize}

\subsection{Organisation}

In Section~\ref{sec:Preliminaries}, we recall some probabilistic tools.
Then in Section~\ref{sec:simpleRouting} we show the existence 
of an $O(C+D)$ routing policy, which is modified in Section~\ref{sec:ConstantEdgeBuffers} to guarantee that constant size edge buffers suffice.
Finally, we show the lower bound in Section~\ref{sec:lowerBound}.

\section{Preliminaries\label{sec:Preliminaries}}

Later, we will need the following concentration result, which is a version of the
\emph{Chernov-Hoeffding bound}: % (see e.g. Theorem 1.1 in the book of Dubhashi and Panconesi~\cite{ConcentrationOfMeasure-DubhashiPanconesi09}):
\begin{lemma}[{\cite[Theorem~1.1]{ConcentrationOfMeasure-DubhashiPanconesi09}}] \label{lem:ChernovBound}
Let $Z_1,\ldots,Z_k \in [0,\delta]$ be independently distributed random variables with sum $Z := \sum_{i=1}^k Z_i$
and let $\mu \geq \E[Z]$.
Then for any $\varepsilon > 0$,
\[
  \Pr[Z > (1+\varepsilon)\mu] \leq \exp\Big( - \frac{\varepsilon^2 }{3} \cdot \frac{\mu}{\delta} \Big).
\]
\end{lemma}

Moreover, we need the \emph{Lovász Local Lemma} (see also the 
books \cite{ProbabilisticMethod-AlonSpencer08} and \cite{ProbAndComp-MitzenmacherUpfal05} and for the constructive version, see~\cite{ConstructiveLLL-MoserTardosJACM10}).
\begin{lemma}[Lovász Local Lemma~\cite{LovaszLocalLemma-ErdosLovasz1975}] \label{lem:LLL}
Let $A_1,\ldots,A_m$ be arbitrary events such that (1) $\Pr[A_i] \leq p$; (2) each $A_i$
depends on at most $d$ many other events; and (3) $4\cdot p\cdot d \leq 1$. Then
$\Pr\big[\bigcap_{i=1}^m \bar{A}_i\big] > 0$.
\end{lemma}

\section{$O(\textrm{congestion} + \textrm{dilation})$ routing\label{sec:simpleRouting}}

After adding dummy paths and edges, we may assume that  $C=D$ and
every path has length exactly $D$.
In the following we show how to route the 
packets within $O(D)$ time units such that in each time step, each edge is traversed
by at most $O(1)$ many packets (by stretching the time by another $O(1)$ factor, 
one can obtain a schedule with makespan $O(D)$ in which each edge is indeed only traversed
by a single packet).
In the following, we call the largest number of packets that traverse the
same edge in one time unit the \emph{load} of the schedule.

%Since the paths for every packet are given, the only choice we can make is 
%to decide, for how many time units a packet should wait in a node.
Let $\Delta > 0$ be a constant that we leave undetermined for now -- at several places
we will simply assume $\Delta$ to be large enough for our purpose.
Consider a  packet $i$ and partition its path $P_i$ into a laminar
family of \emph{blocks} such that the blocks on \emph{level $\ell$} contain
 $D_{\ell} = D^{(1/2)^{\ell}}$ many consecutive edges.\footnote{Depending on $D$, the quantity $D_{\ell}$ may not be integral. 
But all our calculations have enough slack so that one could replace $D_{\ell}$ with the nearest power of $2$. Then we may also assume that for each $\ell$, $D_{\ell}$ divides $D_{\ell-1}$. }
We stop this dissection, when the last block (whose index we denote by $L$) has length between $\Delta$ and $\Delta^2$.
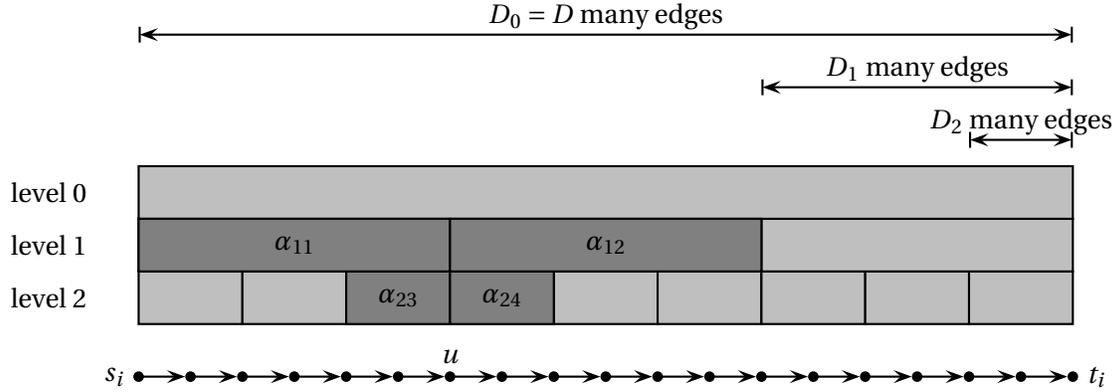
\begin{figure}
\begin{center}
\psset{xunit=0.69cm,yunit=0.7cm}
\begin{pspicture}(-1.3,0)(18,6)
%\psaxes(0,0)(0,0)(18,3)
\multido{\n=0+1,\N=1+1}{19}{
  \pnode(\n,0){u\n}
  \psdot[linewidth=1pt](u\n)
}
\multido{\n=0+1,\N=1+1}{18}{
  \ncline[arrowsize=5pt,nodesepB=2pt]{->}{u\n}{u\N}
}
\multido{\n=0+2}{9}{
  \drawRect{fillstyle=solid,fillcolor=lightgray}{\n}{1}{2}{1}
}
\multido{\n=0+6}{3}{
  \drawRect{fillstyle=solid,fillcolor=lightgray}{\n}{2}{6}{1}
}
  \drawRect{fillstyle=solid,fillcolor=lightgray}{0}{3}{18}{1}
% dark gray rectangles
  \drawRect{fillstyle=solid,fillcolor=gray}{4}{1}{2}{1}
  \drawRect{fillstyle=solid,fillcolor=gray}{6}{1}{2}{1}
  \drawRect{fillstyle=solid,fillcolor=gray}{0}{2}{6}{1}
  \drawRect{fillstyle=solid,fillcolor=gray}{6}{2}{6}{1}
  \rput[c](5,1.5){$\alpha_{23}$}
  \rput[c](7,1.5){$\alpha_{24}$}
  \rput[c](3,2.5){$\alpha_{11}$}
  \rput[c](9,2.5){$\alpha_{12}$}
 \rput[r](-1,3.5){level $0$}
 \rput[r](-1,2.5){level $1$}
 \rput[r](-1,1.5){level $2$}
 \nput{90}{u6}{$u$}
 \nput{-180}{u0}{$s_i$}
 \nput{0}{u18}{$t_i$}
 \pnode(16,4.5){D2L} \pnode(18,4.5){D2R} \ncline[arrowsize=5pt]{|<->|}{D2L}{D2R} \naput[labelsep=1pt]{$D_2$ many edges}
 \pnode(12,5.5){D1L} \pnode(18,5.5){D1R} \ncline[arrowsize=5pt]{|<->|}{D1L}{D1R} \naput[labelsep=1pt]{$D_1$ many edges}
 \pnode(0,6.5){D0L} \pnode(18,6.5){D0R} \ncline[arrowsize=5pt]{|<->|}{D0L}{D0R} \naput[labelsep=1pt]{$D_0=D$ many edges}
\end{pspicture}
\caption{Path $P_i$ and its dissection with $L = 2$. Denote the random waiting time of the $j$th block in level $\ell$ by $\alpha_{\ell j}$. Then the packet would wait for $(W_1-\alpha_{11}) + \alpha_{12} + (W_2 - \alpha_{23}) + \alpha_{24}$ time units in node $u$.\label{fig:dissection}}
\end{center}
\end{figure}

In other words, the root block (i.e. the path $P_i$ itself) % $[1,D]$
is on level $0$ and the depth of that laminar family is $L = \Theta(\log \log D)$ 
(though this quantity will be irrelevant for the analysis).
Each block has 2 \emph{boundary nodes}, a \emph{start node} and an \emph{end node}.
Observe that a level $\ell$ block of length $D_{\ell}$ has 
children of length $D_{\ell+1} = \sqrt{D_{\ell}}$.
Moreover, we define %\emph{waiting times} as 
\[
  W_{\ell} := \begin{cases}  D_{\ell} &  \ell = 0 \\
  D_{\ell}^{1/4} & \ell \geq 1
 \end{cases}
\]
%\[
%  D_{\ell+1} = D^{(1/2)^{\ell+1}} = (D^{(1/2)^{\ell} })^{1/2} = \sqrt{D_{\ell}}.
%\]
%Next, consider a fixed packet $i$. 
%We consider the following random 
%routing policy for packet $i$: we imagine that the packet has a counter, 
%counting from $1$ to $D$. 
The routing policy for packet $i$ is now as follows: 
For each level $\ell$ block, the packet waits a uniformly and independently chosen
random time $x \in [1,W_{\ell}]$ at the start node\footnote{We define $[a,b] := \{ a,a+1,a+2,\ldots,b\}$ as the set of 
integers between $a$ and $b$.}; furthermore the packet waits
 $W_{\ell}-x$ time units at the end node (see Figure~\ref{fig:dissection}). %\footnote{To be precise, if a node is start/end node for several blocks on several levels, then the waiting times are cumulative}.
%when the counter reaches a multiple of $D_{\ell}$ but not $D_{\ell-1}$, then the packet has reached blocks on level $\ell,\ell+1,\ldots,\ell_{\max}$, thus it waits for a random time in $[1,W_{\ell}]$, then for a random time in $[1,W_{\ell+1}]$ and so on.}.
This policy has two crucial properties:
\begin{enumerate*}
\item[(A)] The total waiting time of each packet is $O(D)$.
\item[(B)] The time $t$ at which packet $i$ crosses an edge $e \in P_i$ is a random
variable that  depends only on the random waiting times of the blocks that
contain $e$ --- in fact, i.e. only one block from each level.
\end{enumerate*}
Let us argue, why $(A)$ is true. 
The waiting time on level $\ell=0$ will be precisely $D$, while for each
 $\ell\geq1$ the total level-$\ell$ waiting time for each packet will be $\frac{D}{D_{\ell}} \cdot W_{\ell} = \frac{D}{D_{\ell}^{3/4}}$. 
 Using the crude bound $D_{\ell} \geq 4\cdot D_{\ell+1}$ we have $D_{L-j} \geq 4^j$, %and the fact that on the last level the waiting time will be at most $D$,
hence  on level $L - j>0$, the total waiting time will be at most $\frac{D}{D_{L-j}^{3/4}} \leq \frac{D}{2^{j}}$.
Thus
the total waiting time for a packet, summed over all levels is at most $D + D \sum_{j=0}^{L-1} (\frac{1}{2})^j = O(D)$.
In other words: each packet is guaranteed to arrive after at most $T := O(D)$ time units.
Note that there are instances where the vast majority of random outcomes would
yield a superconstant load on some edge. However, one can prove that 
there \emph{exists} a choice of the waiting times such that the load %of every edge at any time 
does not exceed $O(1)$.

Let $X(e,t,i) \in \{ 0,1\}$ be the random variable that tells us
whether packet $i$ is crossing edge $e$ at time $t$. 
Moreover, let $X(e,t) = \sum_{i=1}^k X(e,t,i)$ be the number of packets crossing $e$
at time $t$. Since packet $i$ waits a random time from $[1,D]$ in $s_i$, we have $\Pr[X(e,t,i)] \leq \frac{1}{D}$
for each $e,i,t$ (more formally: no matter how the waiting times on level $\geq1$ are chosen, there is always at most one out of $D$ outcomes for the level $0$ waiting time that cause packet $i$ to cross $e$ precisely at time $t$). Since no edge is contained in more than $D$ paths, we have $\E[X(e,t)] \leq 1$.

In the following, if $\bm{\alpha} \in [1,W_{\ell}]^{D/D_{\ell}}$ is a vector of 
level $\ell$-waiting times, then $\E[X(e,t) \mid \bm{\alpha}]$ denotes the corresponding 
conditional expectation, depending on $\bm{\alpha}$.
The idea for the analysis is to fix the waiting times on one level at a time (starting with level $0$)
such that the conditional expectation $\E[X(e,t)]$ never increases to a value larger than, say $2$. 
Before we continue, we want to be clear about the behaviour of such conditional random 
variables.
\begin{lemma} \label{lem:ProbabilityRange}
Let $\ell \in \{ 0,\ldots,L-1\}$ and condition on arbitrary waiting times for level $0,\ldots,\ell$.
Then for any packet $i$, edge $e\in E$ and any time $t \in [T]$ one has
\begin{enumerate*}
\item[a)] $\Pr[X(e,t,i)] \leq \frac{1}{W_{\ell+1}}$.
\item[b)] If the event $X(e,t,i)$ has non-zero probability, then $\Pr[X(e,t,i)] \geq \frac{1}{W_{\ell+1}^2}$.
\end{enumerate*}
%Let $\alpha$ be any vector of waiting times 
%Condition on any subset of waiting times. 
%$\Pr[X(e,t,i) \mid \alpha_0,\ldots,\alpha_{\ell}] \leq \frac{1}{W_{\ell}}$
\end{lemma}
\begin{proof}
%The level $\ell$-block in which packet $i$ 
For $(a)$, suppose also all waiting times except of the level $\ell+1$ block in which $i$ 
crosses $e$ are fixed adversarially. Still, there is at most one out of $W_{\ell+1}$
outcomes that cause packet $i$ to cross $e$ at time $t$. 

For $(b)$, observe that the time at which packet $i$ crosses $e$ depends only on the waiting
time of the blocks that contain $e$ (i.e. one block per level). 
The number of possible outcomes of those waiting times
is bounded by $\prod_{j=0}^{L-\ell-1} W_{\ell+1+j} \leq (W_{\ell+1})^{\sum_{j\geq 0} (1/2)^j} = W_{\ell+1}^2$.
\end{proof}

%This is easy to adapt to our setting:
%\begin{lemma}
%Let $Z_1,\ldots,Z_k \in [0,m^{-1/4}]$ be independently distributed random variables with sum $Z := \sum_{i=1}^k Z_i$. 
%Then for any $\varepsilon > 0$,
%\[
%  \Pr[Z > (1+m^{-1/16}) \max\{ 1, \E[Z]\} ] \leq \exp\Big( - \frac{1}{3}m^{1/8} \Big)
%\]
%\end{lemma}
%\begin{proof}
%Assume $\E[Z] \geq 1$. Scale $Z_i' := m^{1/4} \cdot Z_i$. Then for $\varepsilon := m^{-1/16}$ one has
%\[
% \Pr[Z > (1 + \varepsilon)\E[Z]] = \Pr[Z' > (1+\varepsilon)\E[Z']] \leq \exp\Big( - \frac{1}{3} \varepsilon^2 \underbrace{\E[Z']}_{\geq m^{1/4}} \Big)
%\leq e^{- m^{1/8}/3}
%\]
%\end{proof}
%\begin{lemma}
%Let $Z_1,\ldots,Z_k$ be independently distributed random variables $Z_i : \Omega \to [0,\infty[$ such that $Z_i \leq \delta$.
%Then for any $\lambda \geq 0$, 
%\[
%  \Pr[\sum_{i=1}^k Z_i \geq E[\sum_i Z_i] + \lambda \sqrt{k\delta}] \leq 2e^{-\lambda^2/2}
%\]
%\end{lemma}
%at any time
%

%\begin{lemma}
%Assume $0,\ldots,\ell$ are fixed. Then
%\begin{enumerate}
%\item One has $\Pr[X(e,t,i)] \leq \frac{1}{W_{\ell}}$ for all $e,t,i$.
%\item If $\Pr[X(e,t,i)] > 0$, then $\Pr[X(e,t,i)] \geq \frac{1}{D_{\ell}$
The whole analysis boils down to the following lemma, in which we prove that we can always fix
the waiting times on level $\ell$ without increasing the expected load on any edge 
by more than $D_{\ell}^{-1/32}$. 
What happens formally is that we show the existence of a sequence $\bm{\alpha}_0,\ldots,\bm{\alpha}_{L-1}$ such that
$\bm{\alpha}_{\ell}$ denotes a vector of level $\ell$-waiting times and
\begin{equation} \label{eq:expectationIncrease}
   \E[X(e,t) \mid \bm{\alpha}_0,\ldots,\bm{\alpha}_{\ell-1},\bm{\alpha}_{\ell}] \leq \E[X(e,t) \mid \bm{\alpha}_0,\ldots,\bm{\alpha}_{\ell-1}]+ \frac{1}{D_{\ell}^{1/32}} \quad \forall e\in E \; \forall t\in[T]
\end{equation}
%\[
%  \max_{e\in E,t\in[T]} \big\{ \E[X(e,t) \mid \bm{\alpha}_0,\ldots,\bm{\alpha}_{\ell-1},\bm{\alpha}_{\ell}] \big\} \leq \max_{e\in E,t \in [T]} \big\{ \E[X(e,t) \mid \bm{\alpha}_0,\ldots,\bm{\alpha}_{\ell-1}] \big\} + \frac{1}{D_{\ell}^{1/32}} 
%\]
(given that the right hand side is at least $1$). To do this, suppose we already found 
and fixed proper waiting times $\bm{\alpha}_0,\ldots,\bm{\alpha}_{\ell-1}$. Then %the next step consists
%of finding waiting times $\bm{\alpha}_{\ell}$ for level $\ell$ such that 
%\[
%   \E[X(e,t) \mid \bm{\alpha}_0,\ldots,\bm{\alpha}_{\ell-1},\bm{\alpha}_{\ell}] \leq \E[X(e,t) \mid \bm{\alpha}_0,\ldots,\bm{\alpha}_{\ell-1}]+ \frac{1}{D_{\ell}^{\Omega(1)}}.
%\]
one can interpret the left hand side of \eqref{eq:expectationIncrease} as a random 
variable depending on $\bm{\alpha}_{\ell}$,
which is the sum of independently distributed values --- and hence well concentrated.
Moreover the dependence degree of this random variable is bounded by a polynomial in $D_{\ell}$. Thus 
the Lovász Local Lemma provides the existence of suitable waiting times $\bm{\alpha}_{\ell}$.

\begin{lemma} \label{lem:FixingOneLevel}
Let $\ell \in \{ 0,\ldots,L-1\}$ and 
suppose that we already fixed all waiting times on level $0,\ldots,\ell-1$. 
Let $X(e,t)$ be the corresponding conditional random variable
and assume $\gamma \geq \max_{e\in E,t\in[T]} \{ \E[X(e,t)] \}$ and $1\leq \gamma \leq 2$.
Then there are level $\ell$ waiting times $\bm{\alpha}$ such that
\[
\E[X(e,t) \mid \bm{\alpha}] \leq \gamma + \frac{1}{D_{\ell}^{1/32}} \quad \forall e \in E \; \forall t \in [T] 
\]
%Then we can additionally fix the level $\ell$ waiting times 
%such that the expected load on all edges $e$ is upper bounded by $\gamma + m^{-1/16}$
%for the emerging random variables $X'$ one has $\E[X(e,t) \mid\textrm{waiting times on level } 0,\ldots,\ell] \leq \gamma + m^{-1/8}$ 
%at all times $t$; with $m := D_{\ell}$.
\end{lemma}
\begin{proof}
We abbreviate $m := D_{\ell}$.
First recall that on level $\ell$, (1) blocks have length $m$; (2) the child blocks have length $\sqrt{m}$ and (3) the waiting time
on the next level $\ell+1$ is from $[1,m^{1/8}]$.

We define $Y(e,t) := \E[X(e,t) \mid \bm{\alpha}]$ and consider $Y(e,t)$ as
a random variable only depending on $\bm{\alpha}$. Since the waiting times
on levels $0,\ldots,\ell-1$ are already fixed, we know exactly the level 
$\ell$-block in which packet $i$ will cross edge $e$ --- let $\alpha_{i,e}$ be the
random waiting time for that block. Then we can write 
\begin{equation} \label{eq:Y-e-t}
 Y(e,t) = \sum_{i=1}^k \Pr[X(e,t,i) \mid \alpha_{i,e}] 
\end{equation}
%Observe that for all $\alpha_{i,e}$, one has $\Pr[X(e,t,i) \mid \alpha_{i,e}] \leq \frac{1}{m^{1/4}}$, since for all outcomes
%of waiting times on level $\ell+2,\ldots,\ell_{\max}$, there is at most one outcome of 
%level $\ell+1$ that causes packet $i$ to cross $e$ at time $t$. 
By Lemma~\ref{lem:ProbabilityRange}.(b), we know that $\Pr[X(e,t,i) \mid \alpha_{i,e}] \leq \frac{1}{m^{1/8}}$
for every choice of $\alpha_{i,e}$.
Thus $Y(e,t)$ is the sum of independent random variables
in the interval $[0,m^{-1/8}]$ and the Chernov bound (Lemma~\ref{lem:ChernovBound}) provides
%\begin{equation}
\[
  \Pr\Big[Y(e,t) > \gamma + \frac{1}{m^{1/32}}\Big] \leq \exp\Big(- \frac{1}{3}\cdot \frac{1}{(2m^{1/32})^2} \cdot m^{1/8}\Big) \leq e^{-m^{1/16}/12}
\]
%\end{quation}
Now we want to apply the Lovász Local Lemma for the events ``$Y(e,t) > \gamma + m^{-1/32}$'' to argue that it is possible that none of the events happens. So it suffices to bound the dependence degree by 
a polynomial in $m$.
%For this sake, observe that if $\Pr[X(e,t,i) \mid \alpha_{i,e}] > 0$, then even $\Pr[X(e,t,i) \mid \beta_{i,e}] \geq \frac{1}{m}$.
%The reason for that is that the number of different outcomes of the relevant waiting times
%is bounded by $\prod_{j\geq1} m^{(1/2)^j} = m^{\sum_{j\geq1} (1/2)^j} = m$. 
Lemma~\ref{lem:ProbabilityRange}.(b) guarantees that if the event $X(e,t,i)$ is 
possible at all, then $\Pr[X(e,t,i)] \geq \frac{1}{W_{\ell}^2} \geq \frac{1}{m}$. % $\Pr[X(e,t,i)\mid\alpha_{i,e}]>0$, then 
Now, reconsider Equation~\eqref{eq:Y-e-t}  and let 
$Q(e,t) := \{ i \in [k] \mid \Pr[X(e,t,i) > 0\}$ be the set of packets that still have a non-zero
chance to cross edge $e$ at time $t$. Taking expectations of Equation~\eqref{eq:Y-e-t},
we see that
\[
  2 \geq \gamma \geq \E[Y(e,t)] = \sum_{i \in Q(e,t)} \Pr[X(e,t,i)] \geq |Q(e,t)| \cdot \frac{1}{m}
\]
and hence $|Q(e,t)| \leq 2m$.
%In other words, all 
%, thus the definition of 
%$Y(e,t)$ depends on at most $m$ packets. Moreover $\alpha_{i,e}$
%at most $m$ ter
This means that each random variable $Y(e,t)$ depends on at most $2m$ entries 
of $\bm{\alpha}$. Moreover, consider an entry in $\bm{\alpha}$, say it belongs to 
packet $i$ and block $B$. This random variable appears in the definition of $Y(e,t)$ if $e \in B$ and $t$ 
belongs to $B$'s time frame -- these are just $m\cdot O(m)$ many combinations.
Here we use that the time difference between entering a level $\ell$ block and leaving it,
is bounded by $O(D_{\ell})$. 
%
%for each random variable $Y(e,t)$ there is a subset $I(e,t)\subseteq[k]$ of at most $|I(e,t)| \leq m$
%packets, such that depends only on $\{\alpha_{i,e} : i \in I(e,t)\}$. But each block consists
%only of  $m$ edges and a time frame of $O(m)$. 
Overall, the dependence degree is $O(m^3)$.
Since the probability of each bad event ``$Y(e,t) > \gamma + m^{-1/32}$'' is superpolynomially small, 
the claim follows by the Lovász Local Lemma and the assumption that $m\geq\Delta$ is large enough.
\end{proof}
We apply this lemma for $\ell=0,\ldots,L-1$ and the maximum load after any iteration will be bounded by %after $\ell$ iterations will be bounded by 
$1 + \sum_{\ell=0}^{L - 1} (D_\ell)^{-1/32} \leq 2$ for $\Delta$ large enough.
%will at any time be bounded by $1 + \sum_{} D_{\ell}^{-1/8} \leq 2$, 
%for $D_{\min}$ large enough.
The finally obtained random variables $X(e,t,i)$
are \emph{almost} deterministic --- just the waiting times on level $L$ are still 
probabilistic. But again by Lemma~\ref{lem:ProbabilityRange}, all non-zero
probabilities $\Pr[X(e,t,i)]$ are at least $\frac{1}{(\Delta^{1/4})^2} = \Omega(1)$, 
thus making an arbitrary choice for them cannot increase the load by more than 
a constant factor. %hurt since the block lengths are constant.
% then in fact deterministic since the waiting times on all levels have been fixed. 
Finally, we end up with a schedule with load  $O(1)$. 

\section{Providing constant size edge buffers\label{sec:ConstantEdgeBuffers}}

Now let us imagine that each directed edge $(u,v) \in E$ has 
an \emph{edge buffer} at the beginning of the edge. 
Whenever a packet arrives at node $u$ and has $e$ as 
next edge on its path, the packet waits in $e$'s edge buffer.
But a packet $i$ is still allowed to wait an arbitrary amount of time in $s_i$ or $t_i$.

In the construction that we saw above, it may happen that many packets wait 
for a long time in one node, i.e. a large edge buffer might be needed. However, as was shown by Leighton, Maggs and Rao~\cite{CongestionPlusDilation-packet-routing-LeightonMaggsRao94}, one 
can find a schedule such that edge buffers of size $O(1)$ 
suffice. 
More precisely, \cite{CongestionPlusDilation-packet-routing-LeightonMaggsRao94} found a schedule with load $O(1)$ in which
each packet waits at most one time unit in every node --- after stretching, 
this results in a schedule with load $1$ and $O(1)$ buffer size.

In fact, we can modify the construction from Section~\ref{sec:simpleRouting} 
in such a way that we \emph{spread} the waiting time over several edges and obtain the same property. 
Consider the dissection from the last section. Iteratively, for $\ell=1,\ldots,L$, shift 
the level $\ell$-blocks such that every level $\ell-1$ boundary node lies in the
middle of some level $\ell$-block, see Figure~\ref{fig:ShiftedDisWithRandomizationRegions} (note that we assume that $D_{\ell-1}$ is an integral multiple of $D_{\ell}$).
%For each odd $\ell$, 
%we shift the level $\ell$-blocks by $\frac{1}{2}D_{\ell}$ many units to the left such that the endpoints of level $\ell-1$ blocks are in the middle of
%the level $\ell$ block.
Fix a packet $i$ and denote the edges of its path by $P_i = (e_1,\ldots,e_D)$, then 
we assign all edges $e_j$ whose index $j$ is of the form $(1 + 2\setZ)\cdot2^{q}$
to level $L - q$ (for $q\in\{0,\ldots,L-1\}$). For example, this means that all 
odd edges are assigned to the last level; the top level does not get assigned any edges. %Observe that edges are assigned to at most one level.

Now we again define random waiting times for packet $i$ and a block $B$: on level $\ell\geq1$, each block 
picks a uniform random number $x \in [1,W_{\ell}]$. The packet waits on each of the first $x$ edges that are assigned to the block. 
Moreover, it waits on each of the last $W_{\ell}-x$ edges that are assigned to the block.
Observe that regardless of the random outcome, the packet will wait at most
once per edge since  edges are assigned to at most one level.
Using the convenient bound $2^{L-\ell} \leq D_{\ell}^{1/8}$ for $\Delta$ large enough, we see that all 
level-$\ell$ randomization takes place within the first and last $D^{3/8}_{\ell}$ edges of each block,
see Figure~\ref{fig:WaitingTimeForSingleBlock}.

\begin{figure}
\begin{center}
\psset{xunit=0.7cm,yunit=0.7cm}
\begin{pspicture}(-2,-2.5)(12,2)
%\psaxes(0,0)(0,0)(18,3)
\multido{\n=0+1,\N=1+1}{14}{
  \pnode(\n,0){u\n}
  \psdot[linewidth=1pt](u\n)
}
\multido{\n=0+2,\N=1+2}{7}{
  \ncline[arrowsize=5pt,nodesepB=2pt,linewidth=1.5pt]{->}{u\n}{u\N}
}
\multido{\n=1+2,\N=2+2}{6}{
  \ncline[arrowsize=5pt,nodesepB=2pt,linewidth=0.5pt,linecolor=gray,nodesepA=2pt]{->}{u\n}{u\N}
}
\psline[arrowsize=8pt]{|<-}(0,-1)(1,-1)
\psline[arrowsize=5pt,linestyle=dotted]{-}(1,-1)(2,-1)
\psline[arrowsize=5pt,linestyle=solid]{-}(2,-1)(3,-1)
\psline[arrowsize=5pt,linestyle=dotted]{-}(3,-1)(4,-1)
\psline[arrowsize=8pt]{->|}(4,-1)(5,-1)
\rput[c](2.5,-1.4){$x$ edges}
\pnode(0,-2){x1} \pnode(5,-2){x2} \ncline{|<->|}{x1}{x2} \nbput[labelsep=3pt]{$\leq D_{\ell}^{3/8}$}
\pnode(10,-2){x3} \pnode(13,-2){x4} \ncline{|<->|}{x3}{x4} \nbput[labelsep=3pt]{$\leq D_{\ell}^{3/8}$}
%\multido{\n=0+2}{9}{
  \drawRect{fillstyle=solid,fillcolor=lightgray}{0}{1}{13}{1}
%}
\pnode(0,2.5){B1} \pnode(13,2.5){B2} \ncline[arrowsize=5pt]{|<->|}{B1}{B2} \naput[labelsep=2pt]{$D_{\ell}$}

\psline[arrowsize=8pt]{|<-}(10,-1)(11,-1)
\psline[arrowsize=5pt,linestyle=dotted]{-}(11,-1)(12,-1)
\psline[arrowsize=8pt,linestyle=solid]{->|}(12,-1)(13,-1)
\rput[c](11.5,-1.4){$W_{\ell}-x$ edges}

\rput[c](0.5,0.5){$1$}
\rput[c](2.5,0.5){$1$}
\rput[c](4.5,0.5){$1$}
\rput[c](10.5,0.5){$1$}
\rput[c](12.5,0.5){$1$}
\rput[r](-1,1.5){level $\ell$:}
\rput[r](-1,0.5){waiting time:}
\end{pspicture}
\caption{Visualisation of the waiting time for a single level $\ell$-block. Black edges are assigned to the considered block and are labelled with the waiting time.\label{fig:WaitingTimeForSingleBlock}} % Suppose black edges are all assigned to the depicted  level $\ell$-block  with waiting time $x$.}
\end{center}
\end{figure}
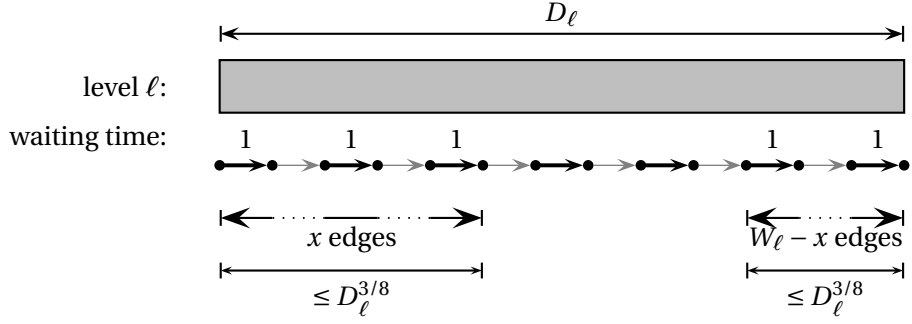

%We already mentioned that no edge is assigned to the top block, which again takes a special role. 
The top block does not get assigned any edge, so instead for each packet $i$, 
we pick a value $x \in [1,D]$ at random and wait $x$ time units in $s_i$.\footnote{If for a block, due to the shifting, some or all waiting edges are shifted ``before'' the source $s_i$, then the packet just waits the missing time in $s_i$.}

Reinspecting Lemma~\ref{lem:ProbabilityRange}, we observe that 
Lemma~\ref{lem:ProbabilityRange}.$b)$ holds without any alterations
and Lemma~\ref{lem:ProbabilityRange}.$a)$ holds as long as the 
considered edge $e$ has a minimum distance of $D^{3/8}_{\ell+1}$ from the 
nearest level $\ell+1$ boundary node.
Surprisingly, also Lemma~\ref{lem:FixingOneLevel} still holds with a minor modification in the claimed bound.
\begin{lemma} \label{lem:FixingOneLevelShifted}
Let $\ell \in \{ 0,\ldots,L-2\}$ and 
suppose that we already fixed all waiting times on level $0,\ldots,\ell-1$. 
Let $X(e,t)$ be the corresponding conditional random variables
and assume $\gamma \geq \max_{e\in E,t\in[T]} \{ \E[X(e,t)] \}$ and $1\leq \gamma \leq 2$.
Then there are level $\ell$ waiting times $\bm{\alpha}$ such that
\[
\E[X(e,t) \mid \bm{\alpha}] \leq \gamma + \frac{1}{D_{\ell}^{1/64}} \quad \forall e \in E \; \forall t \in [T] 
\]
\end{lemma}

\begin{proof}
Again abbreviate $m := D_{\ell}$ and consider  
\[
 Y(e,t) := \E[X(e,t) \mid \bm{\alpha}]= \sum_{i=1}^k \Pr[X(e,t,i) \mid \alpha_{i,e}] 
\]
as a random variable only depending on $\bm{\alpha}$ (recall that $\alpha_{i,e}$ is the random waiting time for that level $\ell$-block in which packet $i$ crosses edge $e$).
%AgainThe crucial argument is that $\Pr[X(e,t,i) \mid \alpha_{i,e}]$.
%The block length on level $\ell+1$ and $\ell+2$ is
%$m^{1/2}$ and $m^{1/4}$, resp., the waiting times are $W_{\ell+1} = m^{1/8}$ and $W_{\ell+2}=m^{1/16}$
%and the minimum distance of 
%a level $\ell+1$ to a level $\ell+2$ boundary node is $\frac{1}{2}m^{1/4}$.
\begin{figure}
\begin{center}
\psset{xunit=1cm,yunit=0.7cm}
\begin{pspicture}(0,-1.9)(10,3.8)
%\psaxes(0,0)(0,0)(18,3)
%\multido{\n=0+1,\N=1+1}{19}{
%  \pnode(\n,0){u\n}
%  \psdot[linewidth=1pt](u\n)
%}
%\multido{\n=0+1,\N=1+1}{18}{
%  \ncline[arrowsize=5pt,nodesepB=2pt]{->}{u\n}{u\N}
%}
%\multido{\n=0+2}{9}{
%  \drawRect{fillstyle=solid,fillcolor=lightgray}{\n}{1}{2}{1}
%}
\drawRect{fillstyle=solid,fillcolor=lightgray,linestyle=none}{0}{0}{10}{3}
% -----
% Area of randomization
\multido{\n=1.8+2.0}{4}{
  \drawRect{fillstyle=solid,fillcolor=gray,linestyle=none}{\n}{0}{0.4}{1}
}
  \drawRect{fillstyle=solid,fillcolor=gray,linestyle=none}{0}{0}{0.2}{1}
  \drawRect{fillstyle=solid,fillcolor=gray,linestyle=none}{9.8}{0}{0.2}{1}
  \drawRect{fillstyle=solid,fillcolor=gray,linestyle=none}{2.6}{1}{0.8}{1}
  \drawRect{fillstyle=solid,fillcolor=gray,linestyle=none}{6.6}{1}{0.8}{1}
  \drawRect{fillstyle=solid,fillcolor=gray,linestyle=none}{4.0}{2}{2}{1}
% ------
\multido{\n=0+2}{5}{
  \drawRect{fillstyle=none,fillcolor=lightgray,linewidth=1.0pt}{\n}{0}{2}{1}
}
\psline[linewidth=1pt](0,2)(10,2)
\psline[linewidth=1pt](0,3)(10,3)
\drawRect{fillstyle=none,fillcolor=lightgray,linestyle=solid,linewidth=1pt}{3}{1}{4}{1}
 \psline[linewidth=1pt](5,2)(5,3)
\rput[l](-2,2.5){level $\ell$}
\rput[l](-2,1.5){level $\ell+1$}
\rput[l](-2,0.5){level $\ell+2$}
\pnode(5,3.4){RL1} \pnode(6,3.4){RL2} \ncline[arrowsize=5pt]{|<->|}{RL1}{RL2} \naput[labelsep=2pt]{$\leq m^{3/8}$}
\pnode(0,4.2){L1} \pnode(5,4.2){L2} \ncline[arrowsize=5pt]{->|}{L1}{L2} \naput[labelsep=2pt]{$m$} \nput{180}{L1}{$\ldots$}
\pnode(2,-0.4){RL21} \pnode(2.2,-0.4){RL22} \ncline[arrowsize=5pt]{|-|}{RL21}{RL22} \nbput[labelsep=2pt]{$\leq m^{3/32}$}
\pnode(7,-0.4){RL11} \pnode(7.4,-0.4){RL12} \ncline[arrowsize=5pt]{|-|}{RL11}{RL12} \nbput[labelsep=2pt]{$\leq m^{3/16}$}
\pnode(0,-1.4){L21} \pnode(2,-1.4){L22} \ncline[arrowsize=5pt]{|<->|}{L21}{L22} \nbput[labelsep=2pt]{$m^{1/4}$}
\pnode(3,-1.4){L21} \pnode(7,-1.4){L22} \ncline[arrowsize=5pt]{|<->|}{L21}{L22} \nbput[labelsep=2pt]{$m^{1/2}$}
\end{pspicture}
\caption{Shifted dissection (with $m := D_{\ell}$). Regions in which randomization takes place are depicted in darkgray (observe these regions do not overlap for consecutive levels).\label{fig:ShiftedDisWithRandomizationRegions}}
\end{center}
\end{figure}
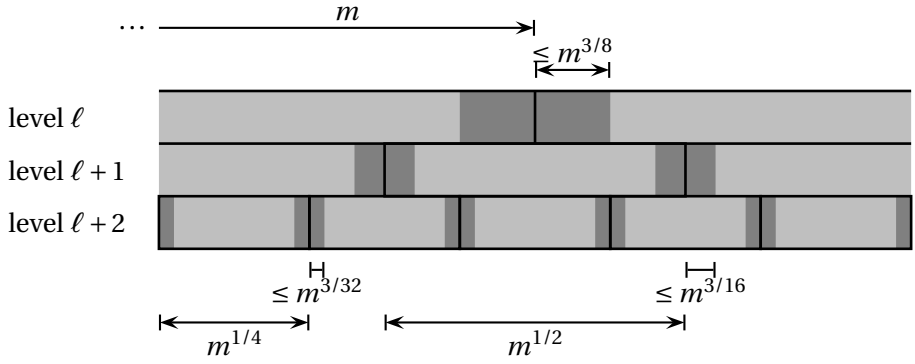
For a fixed edge $e$, for one of those levels
$\ell ' \in \{ \ell +1,\ell+2\}$, the edge $e$ is at least $\frac{1}{4}m^{1/4}$
edges away from the next level $\ell'$ boundary node (see Figure~\ref{fig:ShiftedDisWithRandomizationRegions}).
Consider the level $\ell'$-block $B$ that contains $e$.
As already argued, all randomization takes place on the
first and last $D_{\ell'}^{3/8} \leq D_{\ell+1}^{3/8} = m^{3/16} \ll \frac{1}{4}m^{1/4}$ edges (for $m\geq\Delta$ large enough).
So we can still apply Lemma~\ref{lem:ProbabilityRange}.$a)$ for level $\ell'$ 
to obtain $\Pr[X(e,t,i) \mid \alpha_{i,e}] \leq \frac{1}{W_{\ell'}} \leq \frac{1}{m^{1/16}}$. 
Again by the Chernov bound (i.e. Lemma~\ref{lem:ChernovBound} with 
$\delta := \frac{1}{m^{1/16}}$, $\varepsilon := \frac{1}{2m^{1/64}}$, $\mu := \gamma\geq1$) we have
\[
  \Pr\Big[Y(e,t) > \gamma + \frac{1}{m^{1/64}}\Big] \leq \exp\Big(- \frac{1}{3}\cdot \frac{1}{(2m^{1/64})^2} \cdot m^{1/16}\Big) = e^{-m^{1/32}/12}
\]
Next, note that still $\Pr[X(e,t,i) \mid \alpha_{i,e}] \geq \frac{1}{m}$, given
that this probability is positive. Thus from now on we can 
follow the arguments in the proof of Lemma~\ref{lem:FixingOneLevel}. The dependence degree
is still bounded by $O(m^3)$, thus the claim follows by the 
Lovász Local Lemma since $4\cdot O(m^3) \cdot e^{-m^{1/32}/12} \leq 1$ for $m\geq\Delta$ large enough.
\end{proof}
Again, we have initially $\E[X(e,t)] \leq 1$ for all $e$ and $t$, 
then we fix the waiting times iteratively on level $0,\ldots,L-2$ using Lemma~\ref{lem:FixingOneLevelShifted} and make an arbitrary choice for the waiting times
of level $L-1$ and level $L$. This results in a schedule of length $O(D)$ and load $O(1)$,
in which packets wait at most one time unit before entering an edge.

\section[A (1+epsilon)*(C + D) lower bound]{A $(1+\varepsilon)\cdot(C+D)$ lower bound\label{sec:lowerBound}}

%We prove the following
%\begin{theorem}
%There is a constant $\alpha > 0$ such that there is a family of packet routing
%instances 
%\end{theorem}
%Let $(s_1,t_1,P_1),\ldots,(s_n,t_n,P_n)$ be source-sink pairs $s_i,t_i \in V$, each one equipped with 
%a path $P_i$. 
In this section, we prove that there is an instance in which the optimum makespan 
must be at least $(1+ \varepsilon)\cdot(C + D)$, where $\varepsilon > 0$ is a small constant.
The graph $G=(V,E)$ is defined as depicted in Figure~\ref{fig:LowerBoundConstruction} (the formal definition follows from the definition of the paths, which 
we will see in a second). Edges $e_i = (u_i,v_i)$ are called \emph{critical} edges, 
while we term $(v_i,u_j)$ \emph{back edges}.
%Our goal is to choose the paths $P_i$ at random. % in the graph depicted in Figure~\ref{fig:LowerBoundConstruction}
%such that any routing scheme
%needs congestion $(1+\alpha)\cdot(C + D)$ for a small constant $\alpha > 0$.
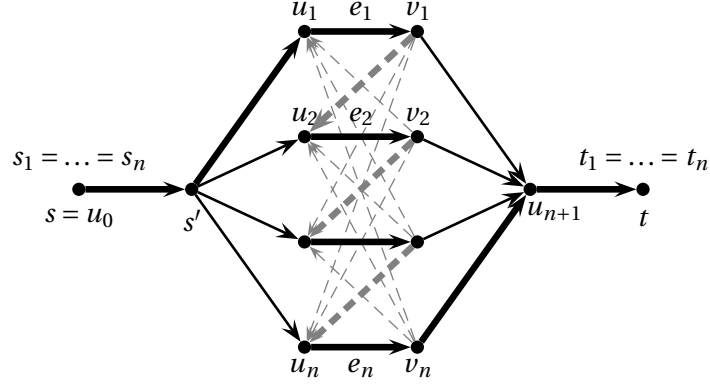
\begin{figure} %[H]
\begin{center}
\psset{xunit=1.5cm,yunit=1.4cm}
\begin{pspicture}(0,-1.8)(5,1.8)
\cnode*(0,0){2.5pt}{s}  \nput{-90}{s}{$s=u_0$} \nput{90}{s}{$s_1=\ldots=s_n$}
\cnode*(1,0){2.5pt}{s2} \nput{-90}{s2}{$s'$}
\cnode*(2,1.5){2.5pt}{u1}
\cnode*(2,0.5){2.5pt}{u2}
\cnode*(2,-0.5){2.5pt}{u3}
\cnode*(2,-1.5){2.5pt}{u4}
\cnode*(3,1.5){2.5pt}{v1}
\cnode*(3,0.5){2.5pt}{v2}
\cnode*(3,-0.5){2.5pt}{v3}
\cnode*(3,-1.5){2.5pt}{v4}
\cnode*(4,0){2.5pt}{t2}
\cnode*(5,0){2.5pt}{t} \nput{90}{t}{$t_1=\ldots=t_n$}\nput{-90}{t}{$t$}
\nput[labelsep=2pt]{90}{u1}{$u_1$}
\nput[labelsep=2pt]{90}{u2}{$u_2$}
\nput[labelsep=2pt]{-90}{u4}{$u_n$}
\nput[labelsep=2pt]{90}{v1}{$v_1$}
\nput[labelsep=2pt]{90}{v2}{$v_2$}
\nput[labelsep=2pt]{-90}{v4}{$v_n$}
\nput[labelsep=3pt]{-60}{t2}{$u_{n+1}$}
\ncline[arrowsize=7pt,linewidth=2.5pt,linestyle=dashed,linecolor=gray]{->}{v1}{u2}
\ncline[arrowsize=4pt,linewidth=0.5pt,linestyle=dashed,linecolor=gray]{->}{v1}{u3}
\ncline[arrowsize=4pt,linewidth=0.5pt,linestyle=dashed,linecolor=gray]{->}{v1}{u4}
\ncline[arrowsize=4pt,linewidth=0.5pt,linestyle=dashed,linecolor=gray]{->}{v2}{u1}
\ncline[arrowsize=4pt,linewidth=2.5pt,linestyle=dashed,linecolor=gray]{->}{v2}{u3}
\ncline[arrowsize=4pt,linewidth=0.5pt,linestyle=dashed,linecolor=gray]{->}{v2}{u4}
\ncline[arrowsize=4pt,linewidth=0.5pt,linestyle=dashed,linecolor=gray]{->}{v3}{u1}
\ncline[arrowsize=4pt,linewidth=0.5pt,linestyle=dashed,linecolor=gray]{->}{v3}{u2}
\ncline[arrowsize=4pt,linewidth=2.5pt,linestyle=dashed,linecolor=gray]{->}{v3}{u4}
\ncline[arrowsize=4pt,linewidth=0.5pt,linestyle=dashed,linecolor=gray]{->}{v4}{u1}
\ncline[arrowsize=4pt,linewidth=0.5pt,linestyle=dashed,linecolor=gray]{->}{v4}{u2}
\ncline[arrowsize=4pt,linewidth=0.5pt,linestyle=dashed,linecolor=gray]{->}{v4}{u3}

\ncline[arrowsize=6pt,linewidth=2.5pt]{->}{s}{s2}
\ncline[arrowsize=6pt,linewidth=2.5pt]{->}{s2}{u1}
\ncline[arrowsize=6pt,linewidth=1pt]{->}{s2}{u2}
\ncline[arrowsize=6pt,linewidth=1pt]{->}{s2}{u3}
\ncline[arrowsize=6pt,linewidth=1pt]{->}{s2}{u4}
\ncline[arrowsize=6pt,linewidth=2.5pt]{->}{u1}{v1} \naput[labelsep=2pt]{$e_1$}
\ncline[arrowsize=6pt,linewidth=2.5pt]{->}{u2}{v2} \naput[labelsep=2pt]{$e_2$}
\ncline[arrowsize=6pt,linewidth=2.5pt]{->}{u3}{v3} %\naput[labelsep=2pt]{$e_1$}
\ncline[arrowsize=6pt,linewidth=2.5pt]{->}{u4}{v4} \nbput[labelsep=2pt]{$e_n$}
\ncline[arrowsize=6pt,linewidth=1pt]{->}{v1}{t2} 
\ncline[arrowsize=6pt,linewidth=1pt]{->}{v2}{t2} 
\ncline[arrowsize=6pt,linewidth=1pt]{->}{v3}{t2} 
\ncline[arrowsize=6pt,linewidth=2.5pt]{->}{v4}{t2} 
\ncline[arrowsize=6pt,linewidth=2.5pt]{->}{t2}{t}

%\cnode*(4,0){2.5pt}{u1} \nput{90}{u1}{$t_1=\ldots=t_n$}
%\ncarc[arcangle=40,linecolor=lightgray,linewidth=1.5pt,arrowsize=6pt]{->}{s}{s2} 
%\ncarc[arcangle=40,linecolor=lightgray,linewidth=1.5pt,arrowsize=6pt]{->}{s}{s2} 
%\ncarc[arcangle=15]{u}{v} %\naput[labelsep=2pt]{$e_1$}
%\ncarc[arcangle=-15]{u}{v} %\naput[labelsep=2pt]{$e_1$}
%\ncarc[arcangle=-40]{u}{v} \nbput[labelsep=2pt]{$e_n$}
%\ncline[arcangle=-40]{u0}{u} \nbput[labelsep=2pt]{$e_0$}
%\ncline[arcangle=-40]{v}{u1} \nbput[labelsep=2pt]{$e_{n+1}$}
\end{pspicture}
\caption{Graph $G=(V,E)$. The bold edges depict the path $P_i$ originating from the identity permutation $\pi_i=(1,2,\ldots,n)$.\label{fig:LowerBoundConstruction}}
\end{center}
\end{figure}
%There is a  \emph{start edge} $e_0$, $n$ parallel edges
%$e_1,\ldots,e_n$ and an \emph{end edge} $e_{n+1}$. 
We want to choose paths $P_1,\ldots,P_n$ as random paths though the network, 
all starting at $s_i:=s$ and ending at $t_i:=t$.
More concretely, each packet $i$ picks a uniform random permutation $\pi_i : [n] \to [n]$
which gives the order in which it moves through the critical edges $e_1,\ldots,e_n$.  
In other words, 
\[
P_i = (s,s',u_{\pi_i(1)},v_{\pi_i(1)},u_{\pi_i(2)},v_{\pi_i(2)},\ldots,,u_{\pi_i(n)},v_{\pi_i(n)},u_{n+1},t).
\]
%Then $P_i = (e_0,e_{\pi_i(1)},e_{\pi_i(2)},\ldots,e_{\pi_i(n)},e_{n+1})$ is the $i$th path.\footnote{Paths visit nodes several times, but the LMR algorithm still works for that case -- it's just forbidden that edges are used several times.}
Then the congestion is $n$ and the dilation is $2n+3$.
We consider the time frame $[1,T]$ with $T = (3+\varepsilon)n$ and claim that for $\varepsilon>0$ small 
enough, there will be no valid routing that is finished by time $T$. 
\begin{theorem} \label{thm:lowerBound}
Pick paths $P_1,\ldots,P_n$ at random. Then with probability $1-e^{-\Omega(n^2)}$, there
is no packet routing policy with makespan at most $3.000032n$
(even if buffers of unlimited size are used).
\end{theorem}
First of all, clearly the makespan must be at least $C+D-1 \approx 3n$
since all paths have the same length $D$ and all packets must first cross
edge $(s,s')$. So if we allow only time $(3+\varepsilon)n$, then there is only 
a small slack of $\varepsilon n$ time units. One can show that the number of 
different possible routing strategies is bounded by $2^{o(n^2)}$ (for $\varepsilon\to0$).
In contrast, we can argue that a \emph{fixed} routing will fail
against random paths with probability $2^{-\Omega(n^2)}$. Then choosing
$\varepsilon$ small enough, the theorem follows using the union bound over
all routing strategies.

 %The length of the time horizon is $T=(2+\alpha)n$.
We call a packet $i$ \emph{active} at time $\tau$ if it is traversing an edge. 
We say a packet is \emph{parking} at time $\tau$ if it is either in the end node $t_i$ nor
in the start node $s_i$. We say a packet is \emph{waiting} if it is neither active nor parking.
%It is our goal to show that 
%In the following, we will prove that
%\[
%\textrm{\#routing policies} \cdot \Pr\left[\substack{\textrm{a particular routing} \\ \textrm{policy is valid}}\right] < 1.
%\]
%for $\alpha$ small enough.

\subsection{The number of potential routing strategies}

Consider a fixed packet $i$ and let us discuss, how a routing strategy 
is defined. The only decision that is made, is of the form: ``\emph{How many time
units shall the packet wait in the $k$-th node on its path (for $k=0,\ldots,D$)''}.
It is not necessary to wait in $s'$ since a packet could instead move to
$u_{\pi_i(1)}$ and wait there. Moreover, it is not needed to wait in one of the nodes $v_j$, 
since instead it could also wait in the next $u_{j'}$ node on its way (the reason is
that if there would be a collision on a back edge $(v_{\pi_i(j)},u_{\pi_i(j+1)})$ with packet $i'\neq i$, 
then this packet $i'$ has crossed the critical edge $(u_{\pi_i(j)},v_{\pi_i(j)})$ together with $i$ 
in the previous time step, so there was already a collision). 
In other words, the complete routing strategy for packet $i$ can be described
as a $(n+2)$-dimensional vector $W_i \in \setZ_{\geq0}^{n+2}$, where
$W_{ij}$ is the time that packet $i$ stays in node $u_{j}$ (for convenience, we denote $s$ also as $u_0$).
Then  $\sum_{j=1}^{n+1} W_{ij}$ is the total waiting time and for $i\in[n]$ and
$W_{i0}$ is the time that $i$ parks in the start node.  

Independently from the outcome of the random experiment, we know the time
when each packet crosses the edges incident to $s$ and to $t$. 
We call $W$ a \emph{candidate routing strategy}, if there is no collision
on $(s,s')$ and $(u_{n+1},t)$ and the makespan of each packet is bounded by
$(3+\varepsilon)n$.

Recall that $H(\delta)=\delta \log\frac{1}{\delta} + (1-\delta) \log \frac{1}{1-\delta}$ is the \emph{binary entropy function}\footnote{Here $\log$ is the binary logarithm.}. Then we have:
\begin{lemma}
The total number of candidate routing matrices $W$ is at most $2^{(\Phi(\varepsilon)+o(1))\cdot n^2}$, 
where  $\Phi(\varepsilon) := H(\frac{\varepsilon}{1+\varepsilon})\cdot (1+\varepsilon)$. 
\end{lemma}
%\begin{lemma}
%Suppose a packet waits in total $\alpha n$ time units. Then the number of routing policies for this particular packet
%is bounded by $2^{\Phi(\alpha)n}$ with $\Phi(\alpha) := H(\frac{\alpha}{1+\alpha})\cdot (1+\alpha)$. The function $\Phi$ is concave for $0\leq\alpha\leq\infty$. 
%For $0\leq\alpha\leq\frac{1}{10}$, we have $\Phi(\alpha) \leq 2^{1.5\alpha\log(\frac{1}{\alpha})n}$.
%\end{lemma}
\begin{proof}
First of all, the parking times in $s$ and the total waiting time $\sum_{j=1}^{n+1} W_{ij}$ 
for a packet $i$ are between $0$ and $(1+\varepsilon)n \leq 2n$; thus there are at most 
 $(2n)^{2n} = 2^{o(n^2)}$ many possibilities to choose them.

Thus assume that the total waiting time $\varepsilon_in = \sum_{j=1}^{n+1} W_{ij}$ for packet $i$ is fixed.
%First of all, the number of possibilities to distribute the total waiting times $\alpha_in$ and the.
Then the number of possibilities how this waiting time can be distributed among nodes $u_1,\ldots,u_{n+1}$ 
is bounded by 
\[
 {(n+1) + (\varepsilon_i n) - 1 \choose \varepsilon_i n} \leq 2^{H(\frac{\varepsilon_i}{1+\varepsilon_i})\cdot (1+\varepsilon_i)\cdot n} = 2^{\Phi(\varepsilon_i)\cdot n}
\]
where we use the bound ${m \choose \delta m} \leq 2^{H(\delta)m}$ with $m = (1+\varepsilon_i)n$ and $\delta = \frac{\varepsilon_i}{1+\varepsilon_i}$.
%\end{proof}
%
%Let $W \in \setZ^{n × (n+1)}$ be a matrix with routing strategies. 
%
%\begin{proof}
%(i.e. the time that the packet neither spends in the end nodes nor being active). 
%The situation is becomes slightly  more complicated, since the total waiting time of 
%the packets does not need to be identical. For example, a packet could cross $e_0$ in the first time unit and $e_{n+1}$
%in the $T$th time unit -- then the packet has $(1+\varepsilon)n$ units that it can spend waiting. 
%However, it suffices to argue that the \emph{average} waiting time cannot exceed $\varepsilon n$.

Next, let us  upperbound the total waiting time $n\sum_{i=1}^n \varepsilon_i$. Of course, the waiting time must fit
into the time frame of length $T=(3+\varepsilon)n$. Since edge $(s,s')$ can only be crossed by one packet at
a time, the cumulated time that the packets spend in the start node is
at least $\sum_{\tau=0}^{n-1} \tau \approx \frac{n^2(1-o(1))}{2}$. The same amount of time is spent by all packets in the end node. 
Moreover, the packets spend at least $2n^2$ time units traversing edges. We conclude that
\[
 n\sum_{i=1}^n \varepsilon_i \leq nT - \frac{n^2(1-o(1))}{2} - \frac{n^2(1-o(1))}{2} - 2n^2 = (\varepsilon+o(1)) n^2,
\]
thus $\sum_{i=1}^n \varepsilon_i \leq (\varepsilon+o(1))n$.
Once the values $\varepsilon_1,\ldots,\varepsilon_n$ are fixed, the total number of routing policies for the $n$ packets is hence upperbounded by 
\[
 \prod_{i=1}^n 2^{\Phi(\varepsilon_i) n}
= 2^{n\sum_{i=1}^n \Phi(\varepsilon_i) }
\leq 2^{n^2 \Phi(\frac{1}{n}\sum_{i=1}^n \varepsilon_i)}
\leq 2^{n^2 (\Phi(\varepsilon) + o(1))}
\]
Here we use Jensen's inequality together with the fact that $\Phi$ is concave. 
The claim follows.
%The number of ways to choose $\varepsilon_1,\ldots,\varepsilon_n$ is bounded by $(2n)^n \leq 2^{o(n^2)}$.
\end{proof}
%.
The important property of function $\Phi$ apart from 
concavity is that $\lim_{\varepsilon \to 0} \Phi(\varepsilon) = 0$. % (see Figure~\ref{fig:FunctionPhi}; 
Note that for $0\leq\varepsilon\leq\frac{1}{10}$, one can conveniently upperbound $\Phi(\varepsilon) \leq 2^{1.5\varepsilon\log(\frac{1}{\varepsilon})n}$.
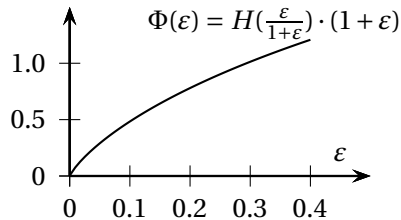
\begin{figure}[H]
\begin{center}
\psset{yunit=1.5cm,xunit=8cm}
\begin{pspicture}(0,-0.2)(0.5,1.6)
\psaxes[Dx=0.1,Dy=0.5,arrowsize=5pt]{->}(0,0)(0,0)(0.5,1.5)
\psplot[algebraic=true]{0.001}{0.4}{(1+x)*( (x/(1+x)) * log((1+x)/x)/log(2) + (1-x/(1+x))*log(1/(1-x/(1+x)))/log(2)  )}
\rput[l](0.13,1.35){$\Phi(\varepsilon) = H(\frac{\varepsilon}{1+\varepsilon})\cdot (1+\varepsilon)$}
\rput[c](0.45,8pt){$\varepsilon$}
\end{pspicture}
\end{center}
\caption{Function $\Phi$.\label{fig:FunctionPhi}}
\end{figure}

%Finally one can check that for $\varepsilon := 0.00005$ and $n$ large enough one has
%\[
% e^{-\frac{n^2}{1224}} \cdot 2^{(\Phi(\varepsilon) + o(1))n^2} < 1
%\]
%\end{comment}
%\subsection{An alternative proof}

\subsection{A fixed strategy vs. random paths}

Now consider a \emph{fixed} candidate routing matrix $W$ and imagine that the paths are taken at random. 
We will show that this particular routing matrix $W$ is not legal with probability $1-e^{\Omega(n^2)}$.
For this sake, we observe that there must be $\Omega(n)$ time units in which at least a constant
fraction of packets cross critical edges. For each such time unit the probability of
having no collision is at most $(\frac{1}{2})^{\Omega(n)}$ and the claim follows.
The only technical difficulty lies in the fact that the outcomes of values 
$\pi_i(j)$ and $\pi_i(j')$ for the random permutations are (mildly) dependent.
\begin{lemma}
Suppose $\varepsilon \leq \frac{1}{20}$. Let $W$ be a candidate routing matrix. Then take paths  $P_1,\ldots,P_n$ at random.
The probability that the routing scheme defined by $W$ is collision-free is 
at most $(\frac{15}{16})^{n^2/128}$.
\end{lemma}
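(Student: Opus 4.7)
The plan is to combine a deterministic counting argument with a birthday-style probabilistic bound that exploits the independence of the random permutations $\pi_1,\ldots,\pi_n$ across different packets.

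First I identify many \emph{busy} time steps. Let $X_\tau$ denote the number of packets traversing a critical edge at time $\tau$. Because each packet crosses each of the $n$ critical edges exactly once during its trip, $\sum_{\tau=1}^{T} X_\tau = n^2$ deterministically, regardless of the random outcome of the permutations. Combined with the obvious bound $X_\tau \leq n$ and $T \leq (3+\varepsilon)n \leq 3.05\,n$ (using $\varepsilon \leq 1/20$), a one-line averaging argument forces at least $\Omega(n)$ time steps to satisfy $X_\tau \geq n/8$; call these the \emph{busy} time steps.

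Next, I bound the probability of no collision at a single busy time step $\tau$ with $s := X_\tau \geq n/8$. The $s$ crossing packets each traverse some critical edge indexed by an appropriate entry of their random permutation. Since the $\pi_i$ are mutually independent across $i$ and marginally each entry $\pi_i(k)$ is uniform over $[n]$, the $s$ chosen critical edges behave like $s$ independent uniform samples from $[n]$. The classical birthday bound then gives $\Pr[\text{no collision at }\tau] \leq \prod_{j=0}^{s-1}(1 - j/n) \leq e^{-s(s-1)/(2n)} \leq e^{-n/130}$, comfortably below $(15/16)^{n/16}$.

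To combine the per-step bounds across the busy time steps, I would reveal the relevant entries of the permutations in increasing order of time, one busy step at a time. At each step, conditional on past revelations, each new $\pi_i(k)$ remains uniform over $[n]$ minus only those values of $\pi_i$ that were revealed at the earlier critical-edge crossings of packet $i$; this is still close enough to uniform for the per-step birthday bound to survive with only a minor loss. Multiplying over $M \geq n/8$ busy time steps yields $\Pr[\text{no collision}] \leq (15/16)^{(n/16)\cdot(n/8)} = (15/16)^{n^2/128}$ with ample slack.

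The main obstacle will be the within-permutation dependencies flagged by the authors (values $\pi_i(k)$ and $\pi_i(k')$ of the same permutation are not independent), together with the subtler issue that the time $\tau_i^{\mathrm{in}}(k)$ at which packet $i$ enters its $k$-th critical edge itself depends on $W_{i,\pi_i(k)}$, so conditioning on the event ``packet $i$ is crossing at time $\tau$'' can bias the distribution of $\pi_i(k)$ away from uniform on $[n]$. However, the independence of the $\pi_i$ across different packets is preserved under any conditioning I need, and the target bound $(15/16)^{n^2/128}$ is considerably weaker than the essentially tight $e^{-\Omega(n^2)}$ that a clean birthday calculation would yield, so these perturbations can be absorbed within the slack.
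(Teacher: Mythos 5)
Your skeleton matches the paper's: find $\Omega(n)$ busy time steps (possible because $\sum_\tau X_\tau = n^2$ and the crossing times are deterministic for a fixed candidate $W$), bound the no-collision probability at each busy step, and chain the conditional probabilities by revealing permutation entries one busy step at a time. The unconditional per-step birthday bound and the observation that independence \emph{across} packets survives the conditioning are both fine. The genuine gap is that the one step you defer --- ``the per-step birthday bound survives with only a minor loss'' under the conditioning --- is precisely the content of the lemma, and with your parameters it actually fails. After conditioning on the outcomes at the earlier busy steps $\tau_1,\ldots,\tau_{j-1}$, packet $i$'s edge at $\tau_j$ is uniform on $\{e_1,\ldots,e_n\}\setminus E_i$ with $|E_i|\le j-1$, and these excluded sets differ from packet to packet. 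Every sequential or split-sample version of the birthday bound then only extracts collision probability from roughly the ``excess'' $s-\max_i|E_i|$ of the $s$ crossing packets. With your threshold $s\ge n/8$ and your plan to multiply over all $M\ge n/8$ busy steps, at the later steps $|E_i|$ can grow to the order of $s$ itself, and the per-step factor $(15/16)^{n/16}$ you are multiplying is no longer available there (it already degrades below that value once $j$ exceeds roughly $n/28$). So the final computation $(15/16)^{(n/16)(n/8)}$ is not justified as written.

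The paper closes exactly this hole with two choices you are missing. First, it uses a higher busyness threshold ($n/4$ crossing packets, obtained from the same averaging argument using $\varepsilon\le\frac{1}{20}$) and caps the number of busy steps used at $k=n/16$, which forces $|E_i|\le n/16\ll n/4$ at every step considered. Second, at each busy step it splits the crossing packets into halves $I_1\,\dot{\cup}\,I_2$, conditions on the $|I_1|=n/8$ edges $E^*$ chosen by $I_1$ being distinct, and notes that each $i\in I_2$ independently lands in $E^*$ with probability at least $(|E^*|-|E_i|)/n\ge(n/8-n/16)/n=1/16$; this gives the clean per-step bound $(15/16)^{n/8}$ with no loss left to ``absorb.'' (Your other worry --- that the time at which packet $i$ reaches its $k$-th critical edge depends on $\pi_i$ --- dissolves once the schedule is parametrized by the waiting time at the $k$-th node \emph{of the path} rather than by the node's name; this is the parametrization under which the $\beta_\tau$ are deterministic, which both you and the paper need.) Your argument is repairable along the same lines: raise the threshold or cap the number of busy steps so that $\max_i|E_i|$ stays a constant factor below $s$, and then actually prove the depleted-support birthday bound. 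As it stands, the quantitative claim $(15/16)^{n^2/128}$ does not follow.
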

\begin{proof}
For time $\tau$, let $\beta_{\tau}n$ be the number of packets that cross one of the
critical edges at time $\tau$, thus $\sum_{\tau=1}^T \beta_{\tau} = n$ (note that the $\beta_{\tau}$'s do not depend on the random experiment).
Let $p := \Pr_{\tau\in[T]}[\beta_{\tau} \geq \frac{1}{4}]$ be the fraction of time units in which at least $\frac{n}{4}$ packets are
crossing a critical edge. Then
\[
 \frac{1}{3+\varepsilon} = \frac{\sum_{\tau=1}^T \beta_{\tau}}{T} = \E_{\tau \in[T]}[\beta_{\tau}] \leq 1\cdot p + (1-p)\cdot\frac{1}{4},
\]
which can be rearranged to $p \geq \frac{1}{10}$ for $\varepsilon \leq \frac{1}{20}$.
In other words, we have $\frac{T}{10} \geq \frac{1}{16}n =: k$ many time units $\bm{\tau}=\{\tau_1,\ldots,\tau_{k}\}$ in which at least $\frac{n}{4}$
many packets are crossing an edge in $e_1,\ldots,e_n$.
Let $A(\tau)$ be the event that there is no collision at time $\tau$. 
Then we can bound the probability of having no collision at all, by just considering
the time units in $\bm{\tau}$:
\[
  \Pr\Big[\bigwedge_{\tau=1}^T A(\tau)\Big] \leq \prod_{j=1}^{k} \Pr[A(\tau_j) \mid A(\tau_1),\ldots,A(\tau_{j-1})]
\stackrel{(*)}{\leq} \left(\frac{15}{16}\right)^{\frac{n}{8} \cdot k} = \left(\frac{15}{16}\right)^{n^2/128}
\]
It remains to justify the inequality $(*)$.
\begin{claim*}
For all $j=1,\ldots,k$ one has $\Pr[A(\tau_j) \mid A(\tau_1),\ldots,A(\tau_{j-1})] \leq (\frac{15}{16})^{n/8}$.
\end{claim*}
%\begin{proofofclaim}
By $P_i(\tau)$ we denote the random variable that gives the edge that $i$ traverses at time $\tau$
(in case that $i$ is waiting in a node $v$, let's say that $P_i(\tau) = (v,v)$). 
Let $E_i := \{ P_i(\tau_{1}),\ldots,P_i(\tau_{j-1}) \} \cap \{ e_1,\ldots,e_n\}$ be the 
critical edges that packet $i$ has visited at $\tau_1,\ldots,\tau_{j-1}$. 
It suffices to show that $\Pr[A(\tau_j) \mid E_1,\ldots,E_n] \leq (\frac{15}{16})^{n/16}$, i.e. 
we condition on those edges $E_i$.
Let $I \subseteq [n]$ with $|I| = \frac{n}{4}$ be the indices of packets that cross a critical edge 
at time $\tau_j$. We split $I$ into equally sized parts $I = I_1 \dot{\cup} I_2$, i.e. $|I_1| = |I_2| = \frac{n}{8}$.
Consider the critical edges $E^* := \{ P_i(\tau_j) \mid i \in I_1\}$ which 
are chosen by packets in $I_1$. 
If $|E^*| < \frac{n}{8}$ then we have a collision, so condition on the event that $|E^*| = \frac{n}{8}$.
Now for all other packets $i \in I_2$, the edge $P_i(\tau_j)$ is a uniform random choice 
from $\{e_1,\ldots,e_n\} \backslash E_i$. Thus we have independently for all $i \in I_2$, 
\[
  \Pr[P_i(\tau_j) \in E^*] = \frac{|E^* \backslash E_i|}{|\{e_1,\ldots,e_n\} \backslash E_i|} \geq \frac{n/8 - n/16}{n} = \frac{1}{16},
\]
since $|E_i| \leq k=\frac{n}{16}$.
Thus
\[
 \Pr\Big[A(\tau_j) \mid \; |E^*| = \frac{n}{8}; \; E_1,\ldots,E_n\Big] \leq \Pr\Big[\bigwedge_{i\in I_2} P_i(\tau_j) \notin E^* \mid \; |E^*| = \frac{n}{8}; \; E_1,\ldots,E_n\Big] \leq \left(\frac{15}{16}\right)^{n/8}
\]
and the claim follows. 
%\end{proofofclaim}
\end{proof}

Finally one can check that for $\varepsilon := 0.000032$ and $n$ large enough one has
\[
 \left(\frac{15}{16}\right)^{n^2/128} \cdot 2^{(\Phi(\varepsilon) + o(1))n^2} < 1
\]
and Theorem~\ref{thm:lowerBound} follows.

Observe that in our instance, $C$ and $D$ are within a factor of $2$ or each other.
In contrast, if $C \gg D$, then there is a schedule of length $(1+o(1))\cdot C$
and buffer size $O(\frac{C}{D})$, see~\cite[Chapter 6]{DBLP:books/sp/Scheideler98}.

\paragraph{Acknowledgements.}

The author is very grateful to Rico Zenklusen for carefully reading
a preliminary draft.

\bibliographystyle{alpha}
\bibliography{simplepacketrouting}

\begin{thebibliography}{KPSW09}

\bibitem[AS08]{ProbabilisticMethod-AlonSpencer08}
N.~Alon and J.~H. Spencer.
\newblock {\em The probabilistic method}.
\newblock Wiley-Interscience Series in Discrete Mathematics and Optimization.
  John Wiley \& Sons Inc., Hoboken, NJ, third edition, 2008.
\newblock With an appendix on the life and work of Paul Erd{\H{o}}s.

\bibitem[CI96]{HardnessOfStoreAndForward-ClementiDiIanniJournal96}
A.~E.~F. Clementi and M.~Di Ianni.
\newblock On the hardness of approximating optimum schedule problems in store
  and forward networks.
\newblock {\em IEEE/ACM Trans. Netw.}, 4(2):272--280, 1996.

\bibitem[DP09]{ConcentrationOfMeasure-DubhashiPanconesi09}
D.~P. Dubhashi and A.~Panconesi.
\newblock {\em Concentration of measure for the analysis of randomized
  algorithms}.
\newblock Cambridge University Press, Cambridge, 2009.

\bibitem[EL75]{LovaszLocalLemma-ErdosLovasz1975}
P.~Erd{\H{o}}s and L.~Lov{\'a}sz.
\newblock Problems and results on {$3$}-chromatic hypergraphs and some related
  questions.
\newblock In {\em Infinite and finite sets ({C}olloq., {K}eszthely, 1973;
  dedicated to {P}. {E}rd{\H o}s on his 60th birthday), {V}ol. {II}}, pages
  609--627. Colloq. Math. Soc. J\'anos Bolyai, Vol. 10. North-Holland,
  Amsterdam, 1975.

\bibitem[FS02]{AcyclicJobShops-FeigeScheideler02}
U.~Feige and C.~Scheideler.
\newblock Improved bounds for acyclic job shop scheduling.
\newblock {\em Combinatorica}, 22(3):361--399, 2002.

\bibitem[KPSW09]{RealTimeRouting-KochPeisSkutellaWieseAPPROX09}
R.~Koch, B.~Peis, M.~Skutella, and A.~Wiese.
\newblock Real-time message routing and scheduling.
\newblock In {\em APPROX-RANDOM}, pages 217--230, 2009.

\bibitem[LMR94]{CongestionPlusDilation-packet-routing-LeightonMaggsRao94}
F.~T. Leighton, B.~M. Maggs, and S.~B. Rao.
\newblock Packet routing and job-shop scheduling in {$O({\rm congestion}+{\rm
  dilation}$}) steps.
\newblock {\em Combinatorica}, 14(2):167--186, 1994.

\bibitem[LMR99]{Algorithms-for-Congestion-plus-Dilation-Packet-Routing-LeightonMaggsRicha99}
T.~Leighton, B.~Maggs, and A.~W. Richa.
\newblock Fast algorithms for finding {$O$} (congestion + dilation) packet
  routing schedules.
\newblock {\em Combinatorica}, 19(3):375--401, 1999.

\bibitem[MS11]{FlowAndJobShopHardness-SvenssonMastrolilli-JACM11}
M.~Mastrolilli and O.~Svensson.
\newblock Hardness of approximating flow and job shop scheduling problems.
\newblock {\em J. ACM}, 58(5):20, 2011.

\bibitem[MT10]{ConstructiveLLL-MoserTardosJACM10}
R.~A. Moser and G.~Tardos.
\newblock A constructive proof of the general lov{\'a}sz local lemma.
\newblock {\em J. ACM}, 57(2), 2010.

\bibitem[MU05]{ProbAndComp-MitzenmacherUpfal05}
M.~Mitzenmacher and E.~Upfal.
\newblock {\em Probability and computing}.
\newblock Cambridge University Press, Cambridge, 2005.
\newblock Randomized algorithms and probabilistic analysis.

\bibitem[MV99]{ShortestPathRouting-MeyerADH-Voecking99}
F.~{Meyer auf der Heide} and B.~V{\"o}cking.
\newblock Shortest-path routing in arbitrary networks.
\newblock {\em J. Algorithms}, 31(1):105--131, 1999.

\bibitem[OR97]{DBLP:conf/stoc/OstrovskyR97}
R.~Ostrovsky and Y.~Rabani.
\newblock Universal {\it o}(congestion + dilation +
  log$^{\mbox{1+epsilon}}${\it n}) local control packet switching algorithms.
\newblock In {\em STOC}, pages 644--653, 1997.

\bibitem[PSW09]{DBLP:conf/waoa/PeisSW09}
B.~Peis, M.~Skutella, and A.~Wiese.
\newblock Packet routing: Complexity and algorithms.
\newblock In {\em WAOA}, pages 217--228, 2009.

\bibitem[PW11]{DBLP:conf/ipco/PeisW11}
B.~Peis and A.~Wiese.
\newblock Universal packet routing with arbitrary bandwidths and transit times.
\newblock In {\em IPCO}, pages 362--375, 2011.

\bibitem[RT96]{DistributedRouting-RabaniTardosSTOC96}
Y.~Rabani and {\'E}.~Tardos.
\newblock Distributed packet switching in arbitrary networks.
\newblock In {\em STOC}, pages 366--375, 1996.

\bibitem[Sch98]{DBLP:books/sp/Scheideler98}
C.~Scheideler.
\newblock {\em Universal Routing Strategies for Interconnection Networks},
  volume 1390 of {\em Lecture Notes in Computer Science}.
\newblock Springer, 1998.

\bibitem[ST00]{ConstantFactorPacketRouting-SrinivasanTeo00}
A.~Srinivasan and C.~Teo.
\newblock A constant-factor approximation algorithm for packet routing and
  balancing local vs. global criteria.
\newblock {\em SIAM J. Comput.}, 30(6):2051--2068, 2000.

\bibitem[Wie11]{Dissertation-Wiese11}
A.~Wiese.
\newblock {\em Packet Routing and Scheduling}.
\newblock Dissertation, TU Berlin, 2011.

\end{thebibliography}

\end{document}